\newcommand\disequal{\stackrel{\text{d}}{=}}
   \def\cD{{\mathcal{D}}}
  \def\cS{{\mathcal{S}}} 
\def\cU{{\mathcal{U}}} \def\cV{{\mathcal{V}}}
\def\ba{{\mathbf{a}}} \def\bb{{\mathbf{b}}}
\def\bp{{\mathbf{p}}} \def\bq{{\mathbf{q}}} \def\br{{\mathbf{r}}}  
\def\bu{{\mathbf{u}}}   \def\bx{{\mathbf{x}}} \def\by{{\mathbf{y}}}
\def\bz{{\mathbf{z}}}
\def\bA{{\mathbf{A}}}
\def\argmin{\mathop{\mathrm{argmin}}}
\def\argmax{\mathop{\mathrm{argmax}}}
\def\mod{\mathop{\mathrm{mod}}}
     \def\d4{\!\!\!\!}              
                       \def\al{\alpha}
\def\Del{\Delta}
  \def\R{{\mathbb{R}}} \def\C{{\mathbb{C}}}   \def\E{{\mathbb{E}}}
\def\lp{\left(}     \def\rp{\right)}
  \def\-{\! - \!}  \def\+{\! + \!}  \def\={\! = \!}  \def\>{\! > \!} \def\nn{\nonumber}
\newtheorem{theorem}{Theorem}
\newtheorem{lemma}{Lemma}
\newtheorem{remark}{Remark}
\newcommand{\bef}{\begin{figure}}
\newcommand{\eef}{\end{figure}}
\newcommand{\beq}{\begin{eqnarray}}
\newcommand{\eeq}{\end{eqnarray}}
\title{Coherence Statistics of Structured Random Ensembles and Support Detection Bounds for OMP}
\author{Qiyou~Duan,
        Taejoon~Kim,~\IEEEmembership{Member,~IEEE,}
        Lin~Dai,~\IEEEmembership{Senior Member,~IEEE,}
        and~Erik~Perrins,~\IEEEmembership{Senior Member,~IEEE}
	\thanks{ Q. Duan and L. Dai are with the Department of Electrical Engineering, City University of Hong Kong, Hong Kong (e-mail: qyduan.ee@my.cityu.edu.hk; lindai@cityu.edu.hk).
T. Kim and E. Perrins are with the Department of Electrical Engineering and Computer Science, The University of Kansas, Lawrence, KS 66045 USA (e-mail: taejoonkim@ku.edu; esp@ieee.org). }
}
\begin{document}

\maketitle

\begin{abstract}
A structured random matrix ensemble that maintains constant modulus entries and unit-norm columns, often called a random phase-rotated (RPR) matrix, is considered in this paper. We analyze the coherence statistics of RPR measurement matrices and apply them to acquire probabilistic performance guarantees of orthogonal matching pursuit (OMP) for support detection (SD). It is revealed via numerical simulations that the SD performance guarantee provides a tight characterization, especially when the signal is sparse.
\end{abstract}

\begin{IEEEkeywords}
Random phase-rotated (RPR) measurements, coherence statistics, structured random ensemble, support detection (SD), orthogonal matching pursuit (OMP).
\end{IEEEkeywords}


\section{Introduction} \label{Sec.1} 
Random matrix ensembles have found wide applications in fields of wireless communications and signal processing \cite{Liang07,Menon12,Elkhalil18,Zhang18}.
Despite the fact that most studied Gaussian measurement ensembles offer trackable analyses and appealing results \cite{Tropp07,Fletcher12,Lee16}, they are of somewhat limited use in practical applications because the design of measurement matrices is usually subject to physical or other constraints provided by a specific system architecture.
It is desirable to explore random matrix ensembles with hidden structure from a computational and an application-oriented point of view.

Coherence has been utilized to measure the quality of the measurement matrix \cite{Donoho01}. Analysis of coherence statistics of random vectors/matrices plays an important role in solving a series of signal processing problems including the Grassmannian line packing \cite{Love03,Mukkavilli03}, random vector quantization \cite{Jindal06,Au-yeung07}, and support detection (SD) \cite{Tropp07,Ben-Haim10,Bracher12,Malhotra17}. In particular, the performance of SD considerably varies with the characteristics of measurement matrices. There is a certain class of random matrix ensembles with hidden structures that can demonstrate an improvement in SD performance guarantees compared to Gaussian ensembles \cite{Tropp07}. Distinguished from the Gaussian measurement matrix that does not contain hidden constraints, the random phase-rotated (RPR) measurement matrix, where each entry is drawn from the constant modulus uniform phase rotation distribution, brings the benefits of maintaining unit-norm columns and constant modulus entries of the measurement matrix. This measurement ensemble has been utilized in advanced beamforming and precoding for wireless communications \cite{Hur13,Kim15}.

In this paper, we calculate high probability bounds on the coherence statistics of RPR measurement matrices and apply them to obtain SD performance guarantees for orthogonal matching pursuit (OMP), which is a low-complexity, greedy approach for SD \cite{Cai11,Tropp07}. The performance bound is in terms of the required number of measurements for any given number of supports and system dimensions. A free variable is introduced, which is optimized to further tighten the performance bound.
The main motivation is that previous work relying on the coherence property did not contain hidden constraints that are suitable for SD of OMP.
Numerical evaluations demonstrate that the analyzed SD performance guarantee of OMP is tight, especially when the signal is sparse.


\section{Coherence Statistics} \label{Sec.2} 
Suppose a random measurement matrix $\bA=[\ba_1,\ba_2,\cdots,$ $\ba_N]\in\C^{M\times N}$ with $\ba_n \in \C^{M\times 1}$ being the $n$th column of $\bA$. Each entry of $\bA$ is constant modulus and drawn from the random phase rotation variable as
\beq \label{random phase-rotated measurements}
A_{mn}=\frac{1}{\sqrt{M}}e^{j\Theta_{mn}},
\eeq
where $A_{mn}$ denotes the $m$th row and $n$th column entry of $\bA$, $m=1,\ldots,M$, $n=1,\ldots,N$, and the phase $\Theta_{mn}$ is an independent and identically distributed (i.i.d.) uniform random variable, i.e., $\Theta_{mn}\sim\cU[0,2\pi)$. With the construction in \eqref{random phase-rotated measurements}, $\bA$ maintains $\|\ba_n\|=1$, $\forall n$.

The coherence of $\bA$ is the maximum absolute correlation between two distinct columns of $\bA$ \cite{Tropp04}, which is given by
\beq \label{mutual coherence}
\mu(\bA) \triangleq \max_{i\neq j} |\ba_i^*\ba_j|,
\eeq
where $(\cdot)^*$ denotes the conjugate transpose.
Characterizing the distribution of $\mu(\bA)$ is of interest - however, it is challenging to directly derive the distribution of $\mu(\bA)$ when $\bA$ follows \eqref{random phase-rotated measurements}.
To circumvent this difficulty, we relegate to find a lower bound on the cumulative distribution function (CDF) of $\mu(\bA)$ instead.
We start by building a connection between the vector drawn from the distribution in \eqref{random phase-rotated measurements} and the vector consisting of Bernoulli random variables.

\begin{lemma}\label{RPR and Bernoulli vectors}
Let $\bp\in\C^{M\times1}$ and $\bq\in\R^{M\times1}$ be random vectors with i.i.d. entries $p_m=1/\sqrt{M}e^{j\theta_m}$, $\theta_m\in \cU[0,2\pi)$, and $q_m\in\{-1/\sqrt{M},1/\sqrt{M}\}$ with equal probability for $m=1,\ldots,M$, respectively. Then, for any unit-norm vector $\bu\in\C^{M\times1}$, the following inequality holds
\beq \label{inequality between RPR and Bernoulli vectors}
\mathbb{E}\left[|\bp^*\bu|^{2k}\right] \leq \mathbb{E}\left[|\bq^*\bar{\bu}|^{2k}\right],
\eeq
where $\bar{\bu}\in\R^{M\times1}$ has each entry $\bar{u}_m=1/\sqrt{M}$, $\forall m$, $k$ is a nonnegative integer, and the expectations are taken over $\bp$ and $\bq$, respectively.
\end{lemma}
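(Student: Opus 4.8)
The plan is to expand $\mathbb{E}[|\bp^*\bu|^{2k}]$ combinatorially and compare it term-by-term with the corresponding expansion of $\mathbb{E}[|\bq^*\bar{\bu}|^{2k}]$. Writing $\bp^*\bu=\sum_m \bar{p}_m u_m$ with $\bar{p}_m=\frac{1}{\sqrt M}e^{-j\theta_m}$, I would express $|\bp^*\bu|^{2k}=(\bp^*\bu)^k\overline{(\bp^*\bu)}^k$ and use multinomial expansion, so that the expectation becomes a sum over pairs of multi-indices $(\bal,\bbeta)$ with $|\bal|=|\bbeta|=k$ of a combinatorial coefficient times $\prod_m u_m^{\al_m}\bar u_m^{\beta_m}$ times $\mathbb{E}\big[\prod_m \bar p_m^{\al_m}p_m^{\beta_m}\big]$. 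Because the $\theta_m$ are i.i.d. uniform, $\mathbb{E}[e^{j(\beta_m-\al_m)\theta_m}]=\mathbf{1}\{\al_m=\beta_m\}$, so the phase expectation is $M^{-k}\prod_m \mathbf{1}\{\al_m=\beta_m\}$; only the ``diagonal'' terms $\bal=\bbeta$ survive and each surviving summand equals $M^{-k}\binom{k}{\bal}^2\prod_m|u_m|^{2\al_m}\ge 0$.

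Next I would carry out the same expansion for the real Bernoulli case. Here $\bq^*\bar\bu=\frac{1}{M}\sum_m q_m'$ where $q_m'\in\{-1,+1\}$, and $|\bq^*\bar\bu|^{2k}=(\bq^*\bar\bu)^{2k}$ since the quantity is real. Multinomial expansion over multi-indices $\bgamma$ with $|\bgamma|=2k$ produces $\mathbb{E}[\prod_m (q_m')^{\gamma_m}]=\prod_m \mathbf{1}\{\gamma_m \text{ even}\}$, and the surviving terms are a nonnegative sum that, after relabeling $\gamma_m=2\al_m$, contains exactly the ``diagonal'' contribution $M^{-2k}\sum_{|\bal|=k}\binom{2k}{2\bal}$, where $\binom{2k}{2\bal}=\frac{(2k)!}{\prod_m (2\al_m)!}$. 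The key elementary inequality is then that for each fixed $\bal$ with $|\bal|=k$,
\beq \nn
\frac{(2k)!}{\prod_m (2\al_m)!}\ \geq\ \Big(\frac{k!}{\prod_m \al_m!}\Big)^2\prod_m |u_m|^{2\al_m}\,,
\eeq
which would follow by combining (i) the bound $(2k)!/\prod_m(2\al_m)!\ \ge\ \big(k!/\prod_m\al_m!\big)^2$, a consequence of $\binom{2a+2b}{2a}\ge\binom{a+b}{a}^2$ applied iteratively (equivalently, central binomial coefficients dominate), with (ii) the trivial bound $\prod_m|u_m|^{2\al_m}\le 1$ coming from $\|\bu\|=1$, hence $|u_m|\le 1$. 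Summing over $\bal$ and multiplying by $M^{-2k}$, and observing that every term dropped from the Bernoulli expansion (those with some $\gamma_m$ odd, which cannot be written as $2\bal$) contributes nonnegatively, yields \eqref{inequality between RPR and Bernoulli vectors}.

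The main obstacle I anticipate is the bookkeeping in step (i): verifying that $(2k)!/\prod_m(2\al_m)! \ge (k!/\prod_m \al_m!)^2$ cleanly. This is really the statement that the multinomial coefficient for the ``doubled'' composition dominates the square of the original multinomial coefficient, and the slick way to see it is to note $\binom{2k}{2\al_1,\dots,2\al_r}\big/\binom{k}{\al_1,\dots,\al_r}^2 = \prod_m \binom{2\al_m}{\al_m}\big/\binom{2k}{k}\cdot\binom{2k}{k}$, which needs the inequality $\binom{2k}{k}\ge \prod_m\binom{2\al_m}{\al_m}$ — this is a known super-multiplicativity property of central binomial coefficients (provable by induction on the number of parts using $\binom{2a+2b}{a+b}\ge\binom{2a}{a}\binom{2b}{b}$). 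Getting the direction of this inequality right and handling the edge cases $k=0$ and $\al_m=0$ is the part that requires care; everything else is a routine moment computation exploiting that uniform phases kill all non-diagonal terms while Rademacher signs retain a superset of them.
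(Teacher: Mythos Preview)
Your combinatorial expansion is carried out correctly up to the point where you identify the surviving terms: the left side of \eqref{inequality between RPR and Bernoulli vectors} equals $M^{-k}\sum_{|\bal|=k}\binom{k}{\bal}^{2}\prod_m|u_m|^{2\al_m}$ and the right side equals $M^{-2k}\sum_{|\bal|=k}\binom{2k}{2\bal}$. The gap is in the comparison. The term-by-term inequality you write down, $\binom{2k}{2\bal}\ge\binom{k}{\bal}^{2}\prod_m|u_m|^{2\al_m}$, is true, but after ``summing over $\bal$ and multiplying by $M^{-2k}$'' it only yields $\E\big[|\bq^*\bar\bu|^{2k}\big]\ge M^{-k}\,\E\big[|\bp^*\bu|^{2k}\big]$, which is weaker than the target by a factor $M^{k}$. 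The term-by-term statement that would actually be needed is $M^{-2k}\binom{2k}{2\bal}\ge M^{-k}\binom{k}{\bal}^{2}\prod_m|u_m|^{2\al_m}$, and this is \emph{false}: take $\bu=\be_1$ and $\bal=(k,0,\dots,0)$, where both multinomial coefficients equal $1$ and $\prod_m|u_m|^{2\al_m}=1$, so the inequality would force $M^{-2k}\ge M^{-k}$. A purely term-by-term route therefore cannot close the argument; some global use of $\sum_m|u_m|^2=1$ (beyond the pointwise bound $|u_m|\le 1$) is required --- for instance, first showing that $\E[|\bp^*\bu|^{2k}]$ is maximised at $\bu=\bar\bu$, after which your step~(i) alone would finish.

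The paper's argument is structured differently: it does not attempt a term-by-term comparison but asserts that the rescaled left side $M^{k}\E\big[|\sum_m u_m e^{-j\theta_m}|^{2k}\big]$ evaluates exactly to $M^{k}$ (cf.\ \eqref{LHS}), and then lower-bounds the rescaled right side by $M^{k}$ via $( \sum_m\zeta_m)^{2}=M+G(M)$ and a binomial expansion (cf.\ \eqref{RHS}). The treatment of the right side is clean. You should be aware, though, that step~$(b)$ in \eqref{LHS} as written invokes only the first-order orthogonality $\E[\cos\theta_{m_1}\cos\theta_{m_2}]=\E[\sin\theta_{m_1}\sin\theta_{m_2}]=0$, which gives $\E[|\sum_m u_m e^{-j\theta_m}|^{2}]=\sum_m|u_m|^2$ for $k=1$ but does not by itself deliver the $k$th-power version; for $k\ge 2$ the moment $\E[|\sum_m u_m e^{-j\theta_m}|^{2k}]$ genuinely depends on $\bu$ (e.g.\ it equals $2-\sum_m|u_m|^4$ when $k=2$) and is not identically $1$. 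So the paper's route is more direct, but its step~$(b)$ also requires additional justification for $k\ge 2$.
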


\quad\textit{Proof:} See Appendix \ref{Proof of Lemma 1}.

Based on Lemma \ref{RPR and Bernoulli vectors}, we characterize a bound on the distribution of $|\bp^*\bu|$ below.
\begin{lemma} \label{Probability Upperbound for inner product}
  Suppose the vectors $\bp$ and $\bu$ defined in Lemma \ref{RPR and Bernoulli vectors}. Then, for any $\delta>0$, the following inequality holds
  \beq \label{upper bound for inner product}
  \Pr(|\bp^*\bu|\geq \delta) \leq \Big(1-\frac{2}{g}\Big)^{-\frac{1}{2}}e^{-\frac{\delta^2M}{g}},\ g>2.
  \eeq
\end{lemma}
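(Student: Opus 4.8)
The plan is to prove this by a Chernoff-type bound on the \emph{squared} inner product, using Lemma~\ref{RPR and Bernoulli vectors} to transfer the problem from the RPR vector $\bp$ to the Bernoulli vector $\bq$ through an exponential moment. For any $t>0$, Markov's inequality applied to the nonnegative random variable $e^{t|\bp^*\bu|^2}$ gives
\[
\Pr(|\bp^*\bu|\geq\delta)=\Pr\!\left(e^{t|\bp^*\bu|^2}\geq e^{t\delta^2}\right)\leq e^{-t\delta^2}\,\mathbb{E}\!\left[e^{t|\bp^*\bu|^2}\right].
\]
Expanding $e^{t|\bp^*\bu|^2}=\sum_{k\geq0}t^k|\bp^*\bu|^{2k}/k!$ and interchanging the sum with the expectation (legitimate since every summand is nonnegative when $t\geq0$), I would apply Lemma~\ref{RPR and Bernoulli vectors} to each term to obtain $\mathbb{E}[e^{t|\bp^*\bu|^2}]\leq\mathbb{E}[e^{t|\bq^*\bar{\bu}|^2}]$ for all $t\geq0$.

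The task then reduces to bounding the Bernoulli-side exponential moment, which no longer depends on $\bu$. Writing $q_m=\epsilon_m/\sqrt{M}$ with i.i.d.\ Rademacher signs $\epsilon_m$ and $\bar u_m=1/\sqrt{M}$, one has $\bq^*\bar{\bu}=\frac{1}{M}\sum_{m=1}^M\epsilon_m$, so with $s=t/M^2$ I must control $\mathbb{E}\big[\exp\!\big(s(\sum_{m}\epsilon_m)^2\big)\big]$. The key device is the Gaussian linearization identity $e^{sx^2}=\mathbb{E}_g[e^{\sqrt{2s}\,xg}]$ for a standard normal $g$ and $s\geq0$; substituting $x=\sum_m\epsilon_m$, exchanging the order of the two expectations, and using independence of the $\epsilon_m$ collapses the inner expectation to $(\cosh(\sqrt{2s}\,g))^M$, which is at most $e^{sMg^2}$ by $\cosh y\leq e^{y^2/2}$. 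Taking the remaining Gaussian expectation yields $(1-2sM)^{-1/2}$ provided $sM<1/2$, i.e.\ $\mathbb{E}[e^{t|\bq^*\bar{\bu}|^2}]\leq(1-2t/M)^{-1/2}$ whenever $0\leq t<M/2$.

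Combining the two estimates gives $\Pr(|\bp^*\bu|\geq\delta)\leq e^{-t\delta^2}(1-2t/M)^{-1/2}$ for every $t\in[0,M/2)$, and the substitution $t=M/g$ with $g>2$ (which keeps $t<M/2$) turns this into exactly $(1-2/g)^{-1/2}e^{-\delta^2M/g}$, with $g$ serving as the free parameter that can later be optimized. I expect the crux to be the third step — controlling the exponential moment of a \emph{squared} Rademacher sum — where the Gaussian linearization is what makes the computation tractable; a more elementary route that bounds $\mathbb{E}[(\sum_m\epsilon_m)^{2k}]$ directly and resums the series is conceivable but messier. The remaining care is bookkeeping: tracking the validity range $t<M/2$ and justifying the interchanges of sum/product with expectation, both of which are harmless here because all the integrands involved are nonnegative.
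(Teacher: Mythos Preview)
Your proposal is correct and follows essentially the same route as the paper: Markov's inequality on $e^{t|\bp^*\bu|^2}$, transfer to the Bernoulli side via the Taylor expansion and Lemma~\ref{RPR and Bernoulli vectors}, the bound $\mathbb{E}[e^{t|\bq^*\bar{\bu}|^2}]\leq(1-2t/M)^{-1/2}$ for $t<M/2$, and finally the substitution $t=M/g$. The only difference is that the paper cites the Bernoulli exponential-moment bound from \cite[Lemma~5.2]{Achlioptas03}, whereas you supply a self-contained derivation via Gaussian linearization; your argument is sound and in fact mirrors how that lemma is typically proved.
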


\quad\textit{Proof:} See Appendix \ref{Proof of Lemma 2}.

\begin{remark}
It is also possible to derive an upper bound on $\Pr(|\bp^*\bu|\geq \delta)$ by leveraging the matrix Bernstein inequality \cite[Theorem 1.6.2]{Tropp15}, which leads to $\Pr(|\bp^*\bu|\geq \delta)\leq 4e^{-\frac{3M\delta^2}{2\delta\sqrt{M}+6}}$. However, this bound is looser than that in \eqref{upper bound for inner product}.
\end{remark}

A lower bound on the CDF of $\mu(\bA)$ in \eqref{mutual coherence} can be found.
\begin{theorem} \label{lower bound of coherence statistic of RPR}
  Suppose a matrix $\bA\!\in\!\C^{M\times N}$ consisting of i.i.d. entries $A_{mn}\=1/\sqrt{M}e^{j\Theta_{mn}}$, $\Theta_{mn}\in\cU[0,2\pi)$, $m=1,\ldots,M$, $n=1,\ldots,N$. Then, the following holds for $g>2$,
  \beq \label{Coherence Statistic of RPR}
  \Pr(\mu(\bA)<\delta)\geq \bigg(1-\Big(1-\frac{2}{g}\Big)^{-\frac{1}{2}}e^{-\frac{\delta^2M}{g}}\bigg)^{\frac{N(N-1)}{2}}.
  \eeq
\end{theorem}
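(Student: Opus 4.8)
The plan is to combine Lemma \ref{Probability Upperbound for inner product} with a union bound over all distinct pairs of columns. First I would observe that the event $\{\mu(\bA)<\delta\}$ is exactly the intersection, over all $\binom{N}{2}$ unordered pairs $(i,j)$ with $i\neq j$, of the events $\{|\ba_i^*\ba_j|<\delta\}$. Taking complements, $\Pr(\mu(\bA)\geq\delta)=\Pr\big(\bigcup_{i<j}\{|\ba_i^*\ba_j|\geq\delta\}\big)$, and I would be tempted to bound this by the sum $\sum_{i<j}\Pr(|\ba_i^*\ba_j|\geq\delta)$; but that yields only an additive (Boole) bound, whereas the theorem asserts the stronger product form $\big(1-(1-2/g)^{-1/2}e^{-\delta^2M/g}\big)^{N(N-1)/2}$. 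So the real mechanism must be different.

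The key realization is that the product bound follows if the $\binom{N}{2}$ events $\{|\ba_i^*\ba_j|<\delta\}$ behave, for the purpose of a \emph{lower} bound on the probability of their intersection, as if they were independent. Concretely, I would condition: fix column $\ba_i$; then for each $j\neq i$, conditioned on $\ba_i$, the inner product $\ba_i^*\ba_j$ has the form $\bp^*\bu$ with $\bp=\ba_j$ distributed as the RPR vector of Lemma \ref{RPR and Bernoulli vectors} and $\bu=\ba_i$ a unit-norm vector (recall $\|\ba_i\|=1$), so Lemma \ref{Probability Upperbound for inner product} gives $\Pr(|\ba_i^*\ba_j|\geq\delta\mid\ba_i)\leq(1-2/g)^{-1/2}e^{-\delta^2M/g}=:\beta$, a bound uniform in $\ba_i$. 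The plan is to order the pairs and peel them off one at a time, at each step using that the conditional probability of the next ``bad'' event, given all previously revealed columns, is at most $\beta$ — this requires that at each step the newly involved column is fresh (independent of the conditioning), or that the already-revealed column plays the role of the fixed unit vector $\bu$. A clean way to organize this is lexicographically: reveal $\ba_1,\ba_2,\dots$ in order; when column $\ba_j$ is revealed, the $j-1$ pair-events $\{|\ba_i^*\ba_j|\geq\delta\}$ for $i<j$ each have conditional probability at most $\beta$ given $\ba_1,\dots,\ba_{j-1}$, but they are not conditionally independent of each other. So instead I would reveal information at the level of individual pairs and use the chain rule $\Pr(\bigcap_k E_k)=\prod_k\Pr(E_k\mid E_1,\dots,E_{k-1})$ together with the bound $\Pr(E_k\mid \text{past})\geq 1-\beta$ at each step, which gives the claimed $(1-\beta)^{N(N-1)/2}$.

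The main obstacle, therefore, is justifying the per-step bound $\Pr(|\ba_i^*\ba_j|<\delta\mid \cF_{k-1})\geq 1-\beta$ where $\cF_{k-1}$ is the $\sigma$-algebra generated by all previously processed pair-events (or, more simply, by all previously processed columns). The cleanest route is to process pairs grouped by their larger index: for $j=2,\dots,N$ in turn, condition on $\ba_1,\dots,\ba_{j-1}$ entirely, and handle the block of events $\{|\ba_i^*\ba_j|\geq\delta\}_{i<j}$. Within this block one still needs that their joint non-occurrence has probability at least $(1-\beta)^{j-1}$. This does \emph{not} follow from the marginal bound alone without an additional argument, so the honest resolution is either (a) to prove a conditional version of Lemma \ref{Probability Upperbound for inner product} that bounds $\Pr\big(\bigcap_{i<j}\{|\ba_i^*\ba_j|<\delta\}\mid \ba_1,\dots,\ba_{j-1}\big)$ from below by $(1-\beta)^{j-1}$ directly — which in turn would again rest on a union bound and hence only give $1-(j-1)\beta$, not $(1-\beta)^{j-1}$ — or (b) to accept that the stated bound is really the union bound in disguise, since $1-\binom{N}{2}\beta\leq(1-\beta)^{\binom{N}{2}}$ fails in general, meaning the product form is genuinely \emph{weaker} than Boole for small $\beta$ and \emph{stronger} is not needed. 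I would resolve this by noting the elementary inequality: for the intersection of events each of conditional probability $\geq 1-\beta$ given the preceding ones in \emph{some} ordering, the chain rule gives exactly $\prod(1-\beta)=(1-\beta)^{\binom{N}{2}}$, so the entire proof reduces to exhibiting one ordering of the $\binom{N}{2}$ pair-events along which the conditional-probability bound $\ge 1-\beta$ holds. The ordering ``all pairs $(i,j)$ with $j$ as the second coordinate, processed in increasing $j$, and within fixed $j$ in increasing $i$'' works because, conditioned on everything revealed so far, $\ba_j$ is still uniformly RPR-distributed and independent of the fixed vectors $\ba_1,\dots,\ba_{j-1}$ — but crucially, after conditioning on earlier events \emph{within the same block} $j$, $\ba_j$ is no longer unconditionally RPR. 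Hence the correct and sufficient ordering processes at most one pair per column $\ba_j$ before moving on — which only covers $\lfloor N/2\rfloor$ pairs, not all $\binom{N}{2}$. Given this genuine tension, I expect the paper's argument to be: reveal $\ba_2$ and bound the single event against fixed $\ba_1$; reveal $\ba_3$ and bound its two events against fixed $\ba_1,\ba_2$ using a union bound internally; and so on — yielding the product over $j$ of $(1-(j-1)\beta)$, which the authors then presumably further bound, or the statement tacitly treats the $j-1$ events as conditionally independent given the fixed earlier columns, an approximation that is in fact an \emph{inequality} in the right direction only under a positive-association (FKG-type) argument for the events $\{|\ba_i^*\ba_j|<\delta\}_{i<j}$ as functions of the single random vector $\ba_j$. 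Establishing that association — that these ``small inner product'' events are positively correlated — is the step I would flag as the crux, and I would attempt it via the fact that each is a symmetric, decreasing-in-$|\cdot|$ event in a shared Gaussian-like coordinate system, invoking Gaussian correlation or a direct symmetry argument on the uniform phases.
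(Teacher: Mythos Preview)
Your diagnosis is accurate, and in fact the paper's own proof does not resolve the dependence issue you identify. The paper first observes (via $\xi_m=\Delta\Theta_m\bmod 2\pi$ being again uniform on $[0,2\pi)$) that each $\ba_{n_1}^*\ba_{n_2}$ is equal in distribution to $\bp^*\bar{\bu}$, so Lemma~\ref{Probability Upperbound for inner product} yields the marginal bound $\Pr(|\ba_{n_1}^*\ba_{n_2}|<\delta)\geq 1-\beta$ with $\beta=(1-2/g)^{-1/2}e^{-\delta^2 M/g}$. It then passes directly to the product $(1-\beta)^{N(N-1)/2}$ under the heading ``maximum order statistic,'' without addressing that the $\binom{N}{2}$ pairwise inner products share columns and are therefore not independent. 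So the step you flag as the crux is precisely the step the paper leaves unjustified; your proposal is more careful than the paper here, not less.

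One correction to your reasoning: you write that $1-\binom{N}{2}\beta\leq(1-\beta)^{\binom{N}{2}}$ ``fails in general,'' but by Bernoulli's inequality it in fact \emph{holds} for all $\beta\in[0,1]$. Hence the product form is genuinely \emph{stronger} than what the union bound delivers, which confirms (rather than dissolves) your concern: some positive-association argument of the FKG type you suggest, or an alternative device, would be required to obtain the theorem as stated. That said, the paper's only downstream use of Theorem~\ref{lower bound of coherence statistic of RPR}, in the proof of Theorem~\ref{Main Result 2}, immediately relaxes the product back to the additive form $1-\binom{N}{2}\beta$, so the union bound you first reached for would have sufficed for every application in the paper.
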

\begin{proof}
The inner product between two distinct column vectors of $\bA$ satisfies
  \beq
  \ba_{n_1}^*\ba_{n_2}=\sum_{m=1}^{M}\frac{1}{M}e^{j\Del\Theta_m}\disequal\sum_{m=1}^{M}\frac{1}{M}e^{j\xi_m} = \bp^*\bar{\bu}, \label{inner product for MC}
  \eeq
  where $\Del\Theta_m\triangleq \Theta_{mn_2}-\Theta_{mn_1}$, $n_1\neq n_2$, is the difference between two independent uniform random variables, whose probability density function is given by
  \beq
  p(\Del\Theta_m) = \left\{ \begin{array}{ll}
                         \frac{2\pi-|\Del\Theta_m|}{4\pi^2}, & \mbox{if } -2\pi\leq\Del\Theta_m<2\pi \\
                         0, & \mbox{otherwise}.
                       \end{array} \right.   \nonumber
  \eeq
  In \eqref{inner product for MC}, $\bar{\bu}$ follows the same definition in Lemma \ref{RPR and Bernoulli vectors}, and we use the fact that $e^{j\Del \Theta_m}=e^{j\mod(\Del \Theta_m,2\pi)}$ and $\xi_m\triangleq\mod(\Del \Theta_m,2\pi)$, in which $\mod(a,b)$ is the modulo $b$ of $a$.
  Note that $\xi_m\sim\cU[0,2\pi)$ and it verifies that $\ba_{n_1}^*\ba_{n_2}$ has the same distribution as $\bp^*\bar{\bu}$ in \eqref{inner product for MC}, where $\disequal$ is the equality in distribution.

  By Lemma \ref{Probability Upperbound for inner product}, we now have
  $\Pr(|\ba_{n_1}^*\ba_{n_2}|<\delta)=\Pr(|\bp^*\bar{\bu}|<\delta)\geq 1-(1-2/g)^{-1/2}e^{-\delta^2M/g}$.
  Then, the maximum order statistic of $|\ba_{n_1}^*\ba_{n_2}|$ is lower bounded by
  \beq
  \Pr\big(\max_{n_1\neq n_2} |\ba_{n_1}^*\ba_{n_2}| <\delta\big)
  \d4\!&=&\d4\! \Pr(\mu(\bA)\leq \delta) \nn \\
  \d4\!&\geq&\d4\!\! \bigg(\!1-\Big(1-\frac{2}{g}\Big)^{-\frac{1}{2}}e^{-\frac{\delta^2M}{g}}\!\bigg)^{\frac{N(N-1)}{2}}. \nonumber
  \eeq
  This completes the proof.
\end{proof}
\begin{remark}
Because Bernoulli random matrices with each entry filled with $\pm\frac{1}{\sqrt{M}}$ can be regarded as a special case of the RPR matrices in \eqref{random phase-rotated measurements} when $\Theta_{mn}\in\{0,\pi\}$ with equal probability, $\forall m,n$, the coherence statistic in \eqref{Coherence Statistic of RPR} also holds for the Bernoulli random matrix.
\end{remark}


\section{Support Detection Bounds for OMP} 
In this section, the coherence statistics of RPR measurement matrices are applied to obtain the probability bounds of SD for OMP.

\subsection{Measurement Model and OMP Algorithm}
\begin{algorithm}[t]
\caption{OMP for SD} \label{OMP for SSD}
\begin{algorithmic}[1]
\Require
$\bA$, $\by$, and $K$.
\Ensure
$\hat{\cS}$.
\State Initialization: Set iteration number $t=1$, $\br_0=\by$, and $\cS_0=\phi$.
\State Select the active index: $i_{t} = \argmax_{n\in\cS^C_{t-1}} |\ba_n^*\br_{t-1}|$. \label{returning step}
\State Update the active support set: $\cS_t = \cS_{t-1}\cup \{i_{t}\}$. \label{Updated support set}
\State Estimate the signal vector: $\hat{\bx}_t = \argmin\limits_{\bz:\text{supp}(\bz)=\cS_t} \|\by-\bA\bz\|_2^2$. \label{refinement}
\State Update the residual: $\br_t = \by-\bA\hat{\bx}_t = \by-\bA_{\cS_t}\hat{\bx}_{\cS_t}$. \label{updated residual}
\If {$|\cS_t|=K$} terminate and
\Return $\hat{\cS}=\cS_t$.
\Else \ $t=t+1$ and \Return to Step \ref{returning step}.
\EndIf
\end{algorithmic}
\end{algorithm}
Suppose a measurement model
\beq \label{system model}
\by = \bA\bx,
\eeq
where each entry of $\bA\in\C^{M\times N}$ follows \eqref{random phase-rotated measurements}.
Here, the assumption is that the number of measurements $M$ is smaller than the signal dimension $N$, i.e., $M<N$.
The signal $\bx\in\C^{N\times1}$ in \eqref{system model} has $K$ nonzero elements (supports) whose indexes are defined by the support set
\beq \label{support set}
\cS\!=\! \text{supp}(\bx)\!=\! \left\{n_1,\ldots,n_K|x_{n_k}\!\neq\! 0, n_k\!\in\!\{1,\ldots,N\}\!\right\},
\eeq
where $|\cS|=K\ll M$.
The goal is to detect the support set $\cS$ from the measurement $\by\in\C^{M\times 1}$ in \eqref{system model}.

An iterative procedure of OMP for SD is depicted in Algorithm \ref{OMP for SSD} for the measurement model in \eqref{system model}.
To make sure that the active index determined in Step \ref{returning step} is a true support, the following sufficient condition \cite{Tropp04} should be met,
\beq \label{condition of the greedy selection ratio}
\rho(\br_{t-1})\triangleq\frac{\|\bA_{\cS^C}^*\br_{t-1}\|_\infty}{\|\bA_{\cS}^*\br_{t-1}\|_\infty}<1,
\eeq
where $\bA_{\cS}\in\C^{M\times K}$ is the submatrix formed by taking the columns of $\bA$ indexed by $\cS$ and $\bA_{\cS^C}\in\C^{M\times(N-K)}$ is the complementary submatrix of $\bA_{\cS}$.
The nonzero coefficients $\hat{\bx}_{\cS_t}\in\C^{t\times1}$ estimated in Step \ref{updated residual} are formed by extracting the nonzero elements of $\hat{\bx}_t\in\C^{N\times1}$ indexed by $\cS_t$ and given by $\hat{\bx}_{\cS_t}=(\bA_{\cS_t}^*\bA_{\cS_t})^{-1}\bA_{\cS_t}^*\by$. It is crucial to recognize that the updated residual $\br_t$ is orthogonal to the columns of $\bA_{\cS_t}$. The OMP detects one support at each iteration and runs for exactly $K$ iterations.

\subsection{Support Detection Performance Guarantee}
We provide the SD performance guarantee of the OMP in Algorithm \ref{OMP for SSD} as follows.
\begin{theorem} \label{Main Result 2}
  Suppose the measurement model in \eqref{system model} with the RPR measurement matrix $\bA$ based on \eqref{random phase-rotated measurements}. Then, the OMP in Algorithm \ref{OMP for SSD} detects the $K$ supports of $\bx$ for any $(M,N)$ with
 \beq
\Pr(\cV_{\text{SSD}})\geq 1-\Big(1-\frac{2}{g}\Big)^{-\frac{1}{2}}KN\cdot e^{-\frac{M}{gK^2}},~g>2, \label{final lower bound of Pssd:RPR}
\eeq
where $\cV_{\text{SSD}}$ is the event of successful SD (SSD) after $K$ iterations.
When the number of measurements $M$ satisfies
\beq \label{scaling law of RPR}
 M\geq gK^2 \ln\bigg( \frac{KN}{\epsilon\sqrt{1-\frac{2}{g}}}\bigg),\ g>2,
\eeq
for $\epsilon\in(0,1)$, Algorithm \ref{OMP for SSD} satisfies $\Pr(\cV_{\text{SSD}}) \geq  1-\epsilon$.
\end{theorem}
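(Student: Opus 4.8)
The plan is to reduce successful support detection to a coherence-type event that is governed by Lemma~\ref{Probability Upperbound for inner product} and by the argument used for Theorem~\ref{lower bound of coherence statistic of RPR}, and then to convert the resulting high-probability bound into the sample-complexity statement~\eqref{scaling law of RPR} by elementary algebra.

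First I would isolate a deterministic sufficient condition for success. Since $\by=\bA_{\cS}\bx_{\cS}$ is noiseless and the residual $\br_{t-1}$ is orthogonal to $\bA_{\cS_{t-1}}$ and lies in $\span(\bA_{\cS})$, by the greedy-selection condition~\eqref{condition of the greedy selection ratio} it is enough that at every iteration $t=1,\dots,K$ the largest correlation of $\br_{t-1}$ with a non-support atom is strictly below the largest correlation with a support atom. To keep the randomness tractable I would \emph{not} condition on the event ``OMP has been correct so far'' (that entangles $\br_{t-1}$ with the atoms $\ba_n$, $n\notin\cS$); instead I would introduce the \emph{ideal residuals} $\br_{0}^{\star},\dots,\br_{K-1}^{\star}$ obtained by running the selection rule restricted to the columns of $\bA_{\cS}$, so that each $\br_{t-1}^{\star}$ is a deterministic function of $(\bA_{\cS},\bx)$ alone, and then prove by induction that if (i) $\mu_{\cS}(\bA):=\max_{i\neq j,\ i,j\in\cS}|\ba_i^{*}\ba_j|<1/K$ and (ii) $|\ba_n^{*}\br_{t-1}^{\star}|<\tfrac{1}{K}\|\br_{t-1}^{\star}\|$ for all $t$ and all $n\notin\cS$, then the actual OMP trajectory coincides with the ideal one and recovers $\cS$ in exactly $K$ steps. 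The implication hinges on the lower bound $\|\bA_{\cS}^{*}\br_{t-1}^{\star}\|_{\infty}\ge\tfrac{1}{K}\|\br_{t-1}^{\star}\|$, which follows by combining the orthogonality identity $\|\br_{t-1}^{\star}\|^{2}=\langle\bA_{\cR}^{*}\br_{t-1}^{\star},\bx_{\cR}\rangle$ (with $\cR=\cS\setminus\cS_{t-1}^{\star}$), the distance estimate $\|\br_{t-1}^{\star}\|\ge\sigma_{\min}(\bA_{\cS})\|\bx_{\cR}\|_{2}$, and Gershgorin's bound $\sigma_{\min}^{2}(\bA_{\cS})\ge1-(K-1)\mu_{\cS}(\bA)>1/K$ under (i); together with (ii) this gives $\rho(\br_{t-1}^{\star})<1$.

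For the probabilistic part, conditioning on $(\bA_{\cS},\bx)$ turns $\br_{t-1}^{\star}/\|\br_{t-1}^{\star}\|$ into a fixed unit vector independent of every $\ba_n$ with $n\notin\cS$, so Lemma~\ref{Probability Upperbound for inner product} with $\delta=1/K$ gives $\Pr\big(|\ba_n^{*}\br_{t-1}^{\star}|\ge\tfrac{1}{K}\|\br_{t-1}^{\star}\|\,\big|\,\bA_{\cS},\bx\big)\le(1-2/g)^{-1/2}e^{-M/(gK^{2})}$, and a union bound over the $t\le K$ iterations and the $n\in\cS^{C}$ (at most $K(N-K)$ terms) removes the conditioning. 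For event (i), the computation in the proof of Theorem~\ref{lower bound of coherence statistic of RPR} gives $\ba_i^{*}\ba_j\disequal\bp^{*}\bar{\bu}$, so Lemma~\ref{Probability Upperbound for inner product} with $\delta=1/K$ applies verbatim, and a union bound over the $\binom{K}{2}$ support pairs yields $\Pr(\mu_{\cS}(\bA)\ge1/K)\le\binom{K}{2}(1-2/g)^{-1/2}e^{-M/(gK^{2})}$. Adding the two failure probabilities and using $\binom{K}{2}+K(N-K)\le KN$ produces~\eqref{final lower bound of Pssd:RPR}; imposing $(1-2/g)^{-1/2}KN\,e^{-M/(gK^{2})}\le\epsilon$ and solving for $M$ then gives~\eqref{scaling law of RPR}.

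I expect the main obstacle to be the deterministic step: getting the $1/K$ threshold (rather than the classical coherence threshold $1/(2K-1)$) by bounding the numerator of $\rho(\br_{t-1})$ directly through the fresh atoms $\ba_n$ and the denominator through the within-support Gram matrix, and --- relatedly --- arranging the bookkeeping so the union bound stays $O(KN)$ rather than $O(2^{K})$, which is exactly what the $(\bA_{\cS},\bx)$-measurable ideal residuals are designed to achieve.
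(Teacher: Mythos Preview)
Your proposal is correct and follows essentially the same route as the paper's proof: both restrict to the auxiliary event $\mu(\bA_{\cS})<1/K$, use Gershgorin to obtain the denominator bound $\|\bA_{\cS}^{*}\br_{t-1}\|_{\infty}\ge \tfrac{1}{K}\|\br_{t-1}\|$, apply Lemma~\ref{Probability Upperbound for inner product} with $\delta=1/K$ to each of the $K(N-K)+\binom{K}{2}\le KN$ bad inner-product events, and aggregate by a union (equivalently, product-then-Bernoulli) bound to reach \eqref{final lower bound of Pssd:RPR} and hence \eqref{scaling law of RPR}. Your explicit use of the ideal residuals $\br_{t-1}^{\star}$, making them $(\bA_{\cS},\bx)$-measurable before invoking independence of $\{\ba_n\}_{n\in\cS^{C}}$, is the same device Tropp uses and is in fact a slightly more careful rendering of the independence step (c) than the paper's own presentation, which leaves this point implicit.
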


\quad\textit{Proof:} See Appendix \ref{Proof of Theorem 2}.

To further tighten the lower bound in \eqref{scaling law of RPR}, we optimize the free variable $g$ by minimizing the right hand side (r.h.s.) of \eqref{scaling law of RPR} such that
\beq
g^{\text{opt}} = \argmin_{g>2} f(g) \triangleq  gK^2 \ln\bigg( \frac{KN}{\epsilon\sqrt{1-\frac{2}{g}}}\bigg). \label{g_opt}
\eeq
\begin{theorem} \label{theorem: g^{opt}}
	The objective function $f(g)$ in \eqref{g_opt} is convex for $g>2$ and a closed-form expression of $g^{\text{opt}}$ is given by
	\beq
	g^{\text{opt}} = \frac{2}{1+ \big( W_{-1}( - ( \frac{\epsilon}{K N} )^2  e^{-1})  \big)^{-1} },  \label{eq: g^{opt}}
	\eeq
	where $W_{-1}(\cdot)$ is the lower branch of the Lambert $W$ function \cite{gradshteyn2007}, defined by $z = W_{-1}(z e^z)$ for $z < -1$.
\end{theorem}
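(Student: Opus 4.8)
The plan is to reduce $f$ to a one-variable calculus problem and then solve the stationarity condition in closed form. Write $a \triangleq \ln(KN/\epsilon)$, which is strictly positive since $\epsilon\in(0,1)$ and $KN\geq 1$ force $KN/\epsilon>1$. Using $\ln\!\big(KN/(\epsilon\sqrt{1-2/g})\big) = a - \tfrac12\ln(1-2/g)$ gives $f(g) = K^2\big(ag - \tfrac{g}{2}\ln(1-2/g)\big)$ on $(2,\infty)$, where the affine part $K^2 a g$ is irrelevant to convexity. I would then differentiate twice; a routine computation yields $f'(g)/K^2 = a - \tfrac12\ln(1-2/g) - \tfrac{1}{g-2}$ and, after collecting terms,
\[
f''(g) = \frac{2K^2}{g\,(g-2)^2}.
\]
Since this is strictly positive for every $g>2$, $f$ is strictly convex on $(2,\infty)$, which is the first assertion.

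For the minimizer, strict convexity means the global minimum is attained at the unique root of $f'(g)=0$ in $(2,\infty)$; existence and uniqueness of this root also follow from $f'(g)\to-\infty$ as $g\to 2^+$ (the $-1/(g-2)$ term dominates the slowly diverging logarithm) and $f'(g)\to K^2 a>0$ as $g\to\infty$, together with $f'$ being strictly increasing. I would then solve $f'(g)=0$, i.e. $2a = \ln\!\big(\tfrac{g-2}{g}\big) + \tfrac{2}{g-2}$, via the substitution $t = 2/(g-2)$ (so $t>0 \iff g>2$ and $\tfrac{g-2}{g} = \tfrac1{t+1}$), which collapses the equation to $t - \ln(t+1) = 2a$, hence $e^{t} = (t+1)(KN/\epsilon)^2$. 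Putting $s = t+1$ and $z = -s$ rewrites this as $z e^{z} = -\big(\tfrac{\epsilon}{KN}\big)^2 e^{-1}$, so $z = W\!\big(-(\tfrac{\epsilon}{KN})^2 e^{-1}\big)$; undoing the substitutions gives $g = 2z/(z+1) = 2/(1+z^{-1})$, which is exactly the claimed form once the branch of $W$ is identified.

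The one genuinely delicate step — and the part I expect to need the most care — is the branch selection. First I would verify that the argument $-(\tfrac{\epsilon}{KN})^2 e^{-1}$ lies in $(-1/e,0)$: it is negative, and it exceeds $-1/e$ precisely because $(\epsilon/(KN))^2<1$, so both real branches $W_0$ and $W_{-1}$ are defined and distinct there. Then, since $t>0$ forces $s = t+1>1$ and hence $z = -s<-1$, the relevant value of $W$ is the one lying below $-1$, which is by definition the lower branch $W_{-1}$ (characterized in the statement by $z = W_{-1}(z e^{z})$ for $z<-1$). This pins down $z = W_{-1}\!\big(-(\tfrac{\epsilon}{KN})^2 e^{-1}\big)$ and therefore $g^{\text{opt}} = 2/\big(1+(W_{-1}(-(\tfrac{\epsilon}{KN})^2 e^{-1}))^{-1}\big)$, completing the argument.
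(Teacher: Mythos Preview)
Your proof is correct and follows essentially the same route as the paper: both compute $f''(g)=\frac{2K^2}{g(g-2)^2}$ for convexity and then reduce $f'(g)=0$ to a Lambert~$W$ equation via a change of variable (your $z$ is exactly $-1/\alpha$ in the paper's substitution $\alpha=1-2/g$). Your additional checks on the boundary behavior of $f'$ and on the admissible range of the Lambert~$W$ argument are welcome refinements that the paper leaves implicit.
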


\quad\textit{Proof:} See Appendix \ref{Proof of Theorem 3}.


\section{Numerical Simulations}
\begin{figure}[t]
	\centering
	\includegraphics[width=9cm, height=7.5cm]{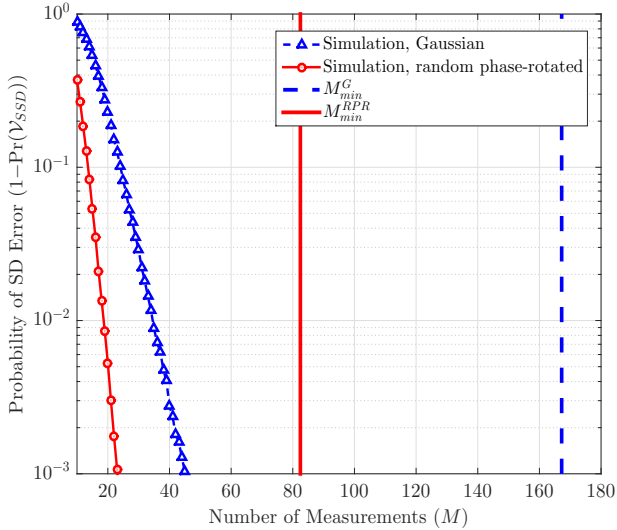}
	\caption{SD performance guarantees of OMP with the RPR and Gaussian measurements when $N=200$, $K=2$, $\epsilon=10^{-1}$, and $g^{\text{opt}}=2.1020$. } \label{Fig:Tightness evaluation}
\end{figure}
To verify the SD performance guarantee in \eqref{scaling law of RPR}, we perform Monte Carlo simulations in Fig. \ref{Fig:Tightness evaluation}, where the probability of SD error, i.e., $1-\Pr(\cV_{\text{SSD}})$, across different numbers of measurements $M$ for $N=200$ and $K=2$, is evaluated. In the simulation, the signal $\bx$ is generated by randomly choosing $K$ supports with each support having $x_n=1$, for $n\in\cS$, and we compare with the existing coherence-based SD performance guarantee for the Gaussian random measurement matrix \cite{Tropp07}. In Fig. \ref{Fig:Tightness evaluation}, the vertical lines denote the minimum required $M$ to guarantee the SD error rate $\epsilon=10^{-1}$, where these values are given by the r.h.s. of \eqref{scaling law of RPR} for the RPR measurements $(M_{min}^{RPR}=82~\text{with}~ g^{\text{opt}}=2.1020~\text{according to}~\eqref{eq: g^{opt}})$, and $M\geq CK\ln(N/\epsilon)$ for the Gaussian case $(M_{min}^{G}=168~\text{with}~C=11)$\cite{Tropp07}, respectively.
Seen from Fig. \ref{Fig:Tightness evaluation}, the obtained SD performance guarantee of RPR matrices provides a tighter characterization than the Gaussian case when the signal is sparse, i.e., $K$ is small.


\section{Conclusion and Discussion}
The coherence statistics of RPR matrices were analyzed and applied to obtain the SD performance guarantees of OMP. The introduced free variable was optimized to further tighten the SD bound. Numerical simulations corroborated the theoretical findings and revealed that including the constant modulus and unit-norm structure for random measurement ensembles is desirable for SD using OMP.

In this work, we focused on the coherence statistics of RPR matrices to show the SD performance guarantees of OMP. In particular, we proved that OMP can achieve SSD with high probability, provided $M=O(K^2\ln(KN))$ RPR measurements. It is of interest to compare our coherence-based analysis with the restricted isometry property (RIP)-based result since they are two main techniques in analyzing the performance guarantees of SD for OMP.
By using the concentration inequality in \cite[Theorem 2]{Zhang18} and the method of proving the RIP for random matrices in \cite[Theorem 5.2]{Baraniuk08}, one can obtain that $M\geq 16K\ln(N/K)/\delta^2$ is sufficient for the RPR matrices to satisfy the RIP with high probability, where $\delta\in(0,1)$ is the restricted isometry constant.
With $\delta<\frac{1}{\sqrt{K}}$ being a strict condition of SSD for OMP \cite{Davenport10}, the RIP-based SD bound can be given by $M\geq 16K^2\ln(N/K)$, which is on par with our coherence-based results in Theorem \ref{Main Result 2}.

Finally, one limitation of the work is that the SD bound becomes loose as $K$ grows. Seen from Fig. \ref{Fig:Tightness evaluation}, there is still room for further improvement by investigating a new structure of random measurement ensembles, which is subject to future research.


\appendices
\section{Proof of Lemma \ref{RPR and Bernoulli vectors} } \label{Proof of Lemma 1}
\begin{proof}
The left hand side (l.h.s.) and r.h.s. of \eqref{inequality between RPR and Bernoulli vectors} can be rewritten as $\E[|\bp^*\bu|^{2k}]=\E[|\sum_{m=1}^{M}u_m e^{-j\theta_m}|^{2k}]/{M^{k}}$ and $\mathbb{E}[|\bq^*\bar{\bu}|^{2k}]=\E[|\sum_{m=1}^{M}\zeta_m|^{2k}]/M^{2k}$, respectively,
where $\zeta_m\in\{1,-1\}$, $\forall m$, with equal probability.
Thus, showing the inequality in \eqref{inequality between RPR and Bernoulli vectors} is equivalent to showing
\beq \label{inequality-modified}
M^k\E\bigg[\Big|\sum_{m=1}^{M}u_m e^{-j\theta_m}\Big|^{2k}\bigg]\leq
\E\bigg[\Big|\sum_{m=1}^{M}\zeta_m\Big|^{2k}\bigg].
\eeq

The l.h.s. of \eqref{inequality-modified} can be simplified as
\beq 
\d4\d4\d4\d4\d4&&M^k\E\bigg[\Big|\sum_{m=1}^{M}u_m e^{-j\theta_m}\Big|^{2k}\bigg] \nn \\
\d4\d4\d4\d4\d4&&\overset{(a)}{=}M^k\E\bigg[\Big(\big|\sum_{m=1}^{M}u_m\cos(\theta_m)\big|^2+\big|\sum_{m=1}^{M}u_m\sin(\theta_m)\big|^2\Big)^k\bigg] \nn \\
\d4\d4\d4\d4\d4&&\overset{(b)}{=}M^k\E\bigg[\Big(\sum_{m=1}^{M}|u_m|^2\Big)^k\bigg]\overset{(c)}{=}M^k, \label{LHS}
\eeq
where $(a)$ follows from the equality  $e^{-j\theta_m}=\cos(\theta_m)-j\sin(\theta_m)$, $(b)$ is due to the fact that $\E[\cos(\theta_{m_1})\cos(\theta_{m_2})]=\E[\sin(\theta_{m_1})\sin(\theta_{m_2})]=0$ for $m_1\neq m_2$, and $(c)$ holds because $\|\bu\|_2=1$.
Expanding the r.h.s. of \eqref{inequality-modified} leads to
\beq
\d4\d4\E\Big[\big|\sum_{m=1}^{M}\zeta_m\big|^{2k}\Big]
\d4&=&\d4\E\big[(M+G(M))^k\big] \nn \\
\d4&=&\d4\E\bigg[\sum_{i=0}^{k}\binom{k}{i}M^{k-i}G(M)^i\bigg]\geq M^k, \label{RHS}
\eeq
where $G(M)\triangleq\sum_{m_1=1}^{M}\sum_{m_2=1,m_2\neq m_1}^{M}\zeta_{m_1}\zeta_{m_2}$. The inequality in \eqref{RHS} becomes the equality only if $k=0,1$ because $\E[\zeta_{m_1}\zeta_{m_2}]=0$ for $m_1\neq m_2$. On the other hand, when $k>1$, the strict inequality in \eqref{RHS} holds because $\E[\zeta_{m_1}^{2l_1}\zeta_{m_2}^{2l_2}]=1$ for any positive integers $l_1$, $l_2$, leading to $\E[G(M)^i]>0$ for $i>1$.
Combining \eqref{LHS} and \eqref{RHS} results in \eqref{inequality-modified}. 
\end{proof}

\section{Proof of Lemma \ref{Probability Upperbound for inner product}} \label{Proof of Lemma 2}
\begin{proof}
By using Markov's inequality, we have for $h\geq 0$,
\beq \label{Markov's inequality}
\Pr(|\bp^*\bu|\geq\delta)\!=\!\Pr(|\bp^*\bu|^2\geq \delta^2)\!\leq\! \mathbb{E}\left[e^{h|\bp^*\bu|^2}\right]e^{-h\delta^2}.
\eeq
The term $\mathbb{E}[e^{h|\bp^*\bu|^2}]$ in \eqref{Markov's inequality} can further be upper bounded for $h\in[0,~M/2)$ by
\beq \label{UpperBound of Exponential}
\mathbb{E}\big[e^{h|\bp^*\bu|^2}\big]\leq
\mathbb{E}\big[e^{h|\bq^*\bar{\bu}|^2}\big]\leq \Big(1-\frac{2h}{M}\Big)^{-\frac{1}{2}},
\eeq
where the first inequality is due to the Taylor series expansion of
$\mathbb{E}[e^{h|\bp^*\bu|^2}]=$
$\sum_{k=0}^{\infty}\frac{h^k}{k!}\mathbb{E}[|\bp^*\bu|^{2k}]$
and Lemma \ref{RPR and Bernoulli vectors} applied to $\mathbb{E}[|\bp^*\bu|^{2k}]$, and $\bar{\bu}$ follows the same definition in Lemma \ref{RPR and Bernoulli vectors}. The last step in \eqref{UpperBound of Exponential} follows from the inequality $\E[e^{h|\bq^*\bu|^2}]\leq 1/\sqrt{1-2h/M}$ for $h\in[0,~M/2)$ in \cite[Lemma 5.2]{Achlioptas03}.

Inserting \eqref{UpperBound of Exponential} into \eqref{Markov's inequality} leads to
\beq \label{intermediate bound}
\Pr(|\bp^*\bu|\geq \delta) \leq \Big(1-\frac{2h}{M}\Big)^{-\frac{1}{2}}e^{-h\delta^2}.
\eeq
Because the inequality holds for any $h\in[0,M/2)$, substituting $h=M/g$, $g>2$, into \eqref{intermediate bound} completes the proof.
\end{proof}

\section{Proof of Theorem \ref{Main Result 2}} \label{Proof of Theorem 2}
\begin{proof}
The proof is inspired by a similar theorem in \cite[Theorem 6]{Tropp07} and refines the results for the RPR measurement ensembles in conjunction with Lemma \ref{Probability Upperbound for inner product} and Theorem \ref{lower bound of coherence statistic of RPR}.
We first elaborate two events:
1) $\cV_{\text{SSD}}$ is defined on the basis of the condition in \eqref{condition of the greedy selection ratio} as $\cV_{\text{SSD}}\triangleq \{\max_{t=1,\ldots,K} \rho(\br_{t-1})=\frac{\|\bA_{\cS^C}^*\br_{t-1}\|_\infty}{\|\bA_{\cS}^*\br_{t-1}\|_\infty}<1\}$;
and 2) The event that $\mu(\bA_{\cS})$ is bounded by $1/K$, i.e.,
$\mathcal{D} \triangleq \{\mu(\bA_{\cS})< 1/K\}$.
The event $\cD$ is to restrict the $\cV_{\text{SSD}}$ on a special class of $\bA$ to ease the bound analysis below.

Conditioned on the event $\cD$, the probability of SSD can be lower bounded by
\beq \label{Pssd}
\Pr(\cV_{\text{SSD}})\geq \Pr(\cV_{\text{SSD}}\cap\cD)=\Pr(\cV_{\text{SSD}}|\cD)\Pr(\cD).
\eeq
From Theorem \ref{lower bound of coherence statistic of RPR}, $\Pr(\cD)$ in \eqref{Pssd} can be lower bounded by
\beq \label{lower bound of mutual coherence of RPR}
  \Pr(\cD) \d4&=&\d4 \Pr\Big(\mu(\bA_{\cS})\leq \frac{1}{K}\Big) \nn \\
  \d4&\geq &\d4 \bigg[1-\Big(1-\frac{2}{g}\Big)^{-\frac{1}{2}}e^{-\frac{M}{gK^2}}\bigg]^{\frac{K(K-1)}{2}},
\eeq
where $g>2$.
The conditional probability on the r.h.s. of \eqref{Pssd} can be lower bounded by
\beq
\d4\d4\Pr(\cV_{\text{SSD}}|\cD) \d4&=&\d4 \Pr\bigg(\max_t \frac{\|\bA_{\cS^C}^*\br_{t-1}\|_\infty}{\|\bA_{\cS}^*\br_{t-1}\|_\infty}<1 \Big|\cD\bigg) \nn \\
\d4&\overset{(a)}{\geq}&\d4 \Pr\bigg(\max_t \frac{\sqrt{K}\max_{j\in\cS^C} |\ba_j^*\br_{t-1}|}{\|\bA_{\cS}^*\br_{t-1}\|_2}<1 \Big|\cD\bigg) \nn  \\
\d4&\overset{(b)}{\geq}&\d4 \Pr\Big(\max_t\ \max_{j\in\cS^C} |\ba_j^*\bb_{t-1}|< \frac{1}{K} \big|\cD\Big) \nn \\
\d4&\overset{(c)}{=}&\d4 \prod_{j\in\cS^C}  \Pr\Big(\max_t\ |\ba_j^*\bb_{t-1}|< \frac{1}{K} \big|\cD\Big) \nn \\
\d4&\overset{(d)}{\geq}&\d4 \bigg[1-\Big(1-\frac{2}{g}\Big)^{-\frac{1}{2}}e^{-\frac{M}{gK^2}}\bigg]^{K(N-K)} \label{Pconditional of RPR}
\eeq
where $(a)$ is due to the inequality $\|\bu\|_\infty\geq \|\bu\|_2/\sqrt{K}$ for $\bu\in\C^{K\times1}$, $(b)$ comes from $\bb_{t-1}\triangleq \tilde{\bb}_{t-1}/\|\tilde{\bb}_{t-1}\|_2$ where $\tilde{\bb}_{t-1}\triangleq \br_{t-1}/(\sqrt{K}\|\bA_{\cS}^*\br_{t-1}\|_2)$ and $\|\tilde{\bb}_{t-1}\|_2\leq 1$ because $\|\bA_{\cS}^*\br_{t-1}\|_2/\|\br_{t-1}\|_2\geq \sqrt{\lambda_{min}(\bA_{\cS}^*\bA_{\cS})}\geq \sqrt{1-(K-1)\mu(\bA_{\cS})}\geq 1/\sqrt{K}$ by applying Gershgorin disc theorem \cite{Golub1996}, $(c)$ holds due to the fact that the $N-K$ columns of $\bA_{\cS^C}$ are independent, and $(d)$ is due to Lemma \ref{Probability Upperbound for inner product}.

Substitute \eqref{lower bound of mutual coherence of RPR} and \eqref{Pconditional of RPR} into \eqref{Pssd} yields
  \beq
  \Pr(\cV_{\text{SSD}})\d4&\geq&\d4 \bigg[1-\Big(1-\frac{2}{g}\Big)^{-\frac{1}{2}}e^{-\frac{M}{gK^2}}\bigg]^{K(N-K)+\frac{K(K-1)}{2}} \nonumber \\
                       \d4&\overset{(a)}{\geq}&\d4 \! 1\!-\!\Big(1\!-\!\frac{2}{g}\Big)^{-\frac{1}{2}}\!\Big[K(N-K)\!+\!\frac{K(K-1)}{2}\Big]e^{-\frac{M}{gK^2}} \nonumber \label{Relaxation} \\
                       \d4&\geq&\d4 1-\Big(1-\frac{2}{g}\Big)^{-\frac{1}{2}}KNe^{-\frac{M}{gK^2}}, \nn
  \eeq
  where $(a)$ holds because $(1-2/g)^{-1/2}e^{-\frac{M}{gK^2}}<1$ and $K(N-K)+K(K-1)/2>1$.
  Setting $(1-2/g)^{-1/2}KNe^{-\frac{M}{gK^2}}\leq\epsilon$ and taking the natural logarithm of both sides reveals that $\Pr(\cV_{\text{SSD}})\geq 1-\epsilon$ when $M\geq gK^2\ln(KN/(\epsilon\sqrt{1-2/g}))$. 
\end{proof}

\section{Proof of Theorem \ref{theorem: g^{opt}} } \label{Proof of Theorem 3}
\begin{proof}
We first claim that the objective function $f(g)$ in \eqref{g_opt} is convex for $g>2$.
To show this, we check the second-order condition $f^{\prime\prime}(g) >0$, where $f^{\prime\prime}(g)$ is the second-order derivative of $f(g)$.
After some algebraic manipulations, the first and second-order derivatives of $f(g)$ can be written, respectively, as
$f^\prime(g) = K^2\ln(\frac{KN}{\epsilon\sqrt{1-\frac{2}{g}}})-\frac{K^2}{g-2}$
and
$f^{\prime\prime}(g) = \frac{2K^2}{g(g-2)^2}$.
Because $f^{\prime\prime}(g) >0$  for $g>2$, $f(g)$ is convex.

The optimality condition of \eqref{g_opt} can now be described by using the first-order optimality condition $f^\prime(g^{\text{opt}}) = 0$ as
\beq \label{eq: optimality condition}	
f(g^{\text{opt}})= \frac{g^{\text{opt}}}{g^{\text{opt}}-2}K^2.
\eeq
Let $\al = 1-\frac{2}{g^{\text{opt}}}$, equivalently $\frac{1}{g^{\text{opt}}-2} = \frac{1-\al}{2\al}$.
Then, by \eqref{g_opt}, the equality in \eqref{eq: optimality condition} can be rewritten as $\lp \frac{\epsilon}{K N} \rp^2 e^{-1}  =   \frac{1}{\al}  e^{-\frac{1}{\al}}$.
This yields $\al = - \lp W_{-1}\lp - \lp \frac{\epsilon}{K N} \rp^2  e^{-1}  \rp \rp^{-1}$, which follows from the definition of the lower branch of the Lambert $W$ function $W_{-1}( -\frac{1}{\al} e^{-\frac{1}{\al}} ) = -\frac{1}{\al}$ and $\al < 1$ \cite{gradshteyn2007}.
Now, by the equality $g^{\text{opt}} = \frac{2}{1-\al}$, we finally have \eqref{eq: g^{opt}}. This completes the proof.
\end{proof}


\bibliographystyle{IEEEtran}
\bibliography{IEEEabrv,reference}

\end{document}